\def\twon #1{\left\|#1\right\|_2}
\def\onen #1{\left\|#1\right\|_1}
\def\sgn #1{\text{sgn}#1}
\def\abs #1{\left|#1\right|}
\def\inp #1{\left\langle#1\right\rangle}
\def\bC{\mathbb{C}}
\def\bR{\mathbb{R}}
\def\cN{\mathcal{N}}
\def\cL{\mathcal{L}}
\def\bee{\begin{equation}}
\def\ene{\end{equation}}
\def\beq{\begin{eqnarray}}
\def\enq{\end{eqnarray}}
\newtheorem{prop}{Proposition}
\newtheorem{find}{Finding}
\def\equ #1{\begin{equation}#1\end{equation}}
\def\sbra #1{\left(#1\right)}
\def\mbra #1{\left[#1\right]}
\def\lbra #1{\left\{#1\right\}}
\def\m #1{\emph{\textbf{#1}}}
\def\po{\left(\text{BP}^{o}\right)}
\title{On Phase Transition of Compressed Sensing in the Complex Domain}
\author{Zai Yang, Cishen Zhang, and Lihua Xie, \emph{Fellow, IEEE}
\vspace{-0.5cm}
\thanks{Z. Yang and L. Xie are with EXQUISITUS, Centre for E-City, School of Electrical and Electronic Engineering, Nanyang Technological University, 639798, Singapore (e-mail: yang0248@e.ntu.edu.sg; elhxie@ntu.edu.sg).

C. Zhang is with the Faculty of Engineering and Industrial Sciences, Swinburne University of Technology, Hawthorn VIC 3122, Australia (e-mail: cishenzhang@swin.edu.au).}}
\begin{document}
\maketitle
\begin{abstract}
The phase transition is a performance measure of the sparsity-undersampling tradeoff in compressed sensing (CS). This letter reports our first observation and evaluation of an empirical phase transition of the $\ell_1$ minimization approach to the complex valued CS (CVCS), which is positioned well above the known phase transition of the real valued CS in the phase plane. This result can be considered as an extension of the existing phase transition theory of the block-sparse CS (BSCS) based on the universality argument, since the CVCS problem does not meet the condition required by the phase transition theory of BSCS but its observed phase transition coincides with that of BSCS. Our result is obtained by applying the recently developed ONE-L1 algorithms to the empirical evaluation of the phase transition of CVCS.
\end{abstract}

\begin{keywords}
Compressed sensing, complex signals, $\ell_1$ minimization, ONE-L1 algorithms, Phase transition, Joint sparsity.
\end{keywords}
\section{Introduction}
Compressed sensing (CS) aims at recovering a signal from reduced number of linear measurements under a sparsity/compressibility condition. Different from the conventional linear recovery approach, a nonlinear scheme is used in CS to solve the following basis pursuit (BP) problem
\[\min_{\m{x}}\onen{\m{x}}, \text{ subject to }\m{A}\m{x}=\m{b},\]
where $\m{x}\in\bC^{N}$ is a sparse signal to be recovered, $\m{A}\in\bC^{n\times N}$ is a sensing matrix and $\m{b}\in\bC^n$ is a vector of sample data, with $N$ and $n$ being the signal length and sample size respectively and typically $n\ll N$.

The theory of CS is mainly focused on studying how aggressively a sparse signal can be undersampled while still preserving all information for its recovery. The existing results include those based on incoherence\cite{candes2007sparsity}, restricted isometry property (RIP)\cite{candes2006compressive} and phase transition theory\cite{donoho2010counting}. While the incoherence and RIP are sufficient conditions for the sparse signal recovery and typically quite conservative in practice\cite{blanchard2010compressed}, the phase transition defined for $N\rightarrow \infty$ is most precise owing to its necessary and sufficient condition for measuring the sparsity-undersampling tradeoff performance. For the real valued CS (RVCS) using the sensing matrix $\m{A}$ with i.i.d. random Gaussian entries, the sparsity-undersampling tradeoff of the BP is controlled by the sampling ratio $\delta=n/N$ and sparsity ratio $\rho=k/n$, where $k$ denotes the number of nonzero entries of $\m{x}$. As a result, the plane of $\sbra{\delta,\rho}$ is divided by a phase transition curve into two phases, a `success' phase where BP successfully recovers the sparse signal and a `failure' phase where the original signal cannot be recovered by solving BP, both with an overwhelming probability.

In RVCS, three different approaches, combinatorial geometry \cite{donoho2010counting}, null space method \cite{stojnic2009various} and state evolution \cite{donoho2009message}, to the phase transition of BP have provided identical results. Numerical simulations have shown that the observed phase transition matches the theoretical curve even for modestly large $N$, e.g., $N=1000$ \cite{yang2010orthonormal,donoho2009observed}, where the phase transition for finite-$N$ is defined as the value of $\rho$ at which the original signal is successfully recovered with the probability of $50\%$. Moreover, it is practically observed that the Gaussian condition on $\m{A}$ can be considerably relaxed, resulting in the well known observed universality of phase transitions \cite{donoho2009observed}.

To the best of our knowledge, the existing results on phase transition, before our preliminary work \cite{yang2011sparsity}, only deal with the RVCS problem. This letter presents our first study of the phase transition in the more general complex domain where both the signal and sensing matrix are complex valued. Its significance is twofold: 1) discovering a new phase transition for the complex valued CS (CVCS) so to inspire completeness of the phase transition theory; 2) providing insight into performance of numerous CS applications involving complex data, e.g., magnetic resonance imaging (MRI)\cite{lustig2007sparse}, radar imaging\cite{herman2009high} and source localization\cite{yang2011off}.

This letter shows that the recently developed orthonormal expansion $\ell_1$-minimization (ONE-L1) algorithms \cite{yang2010orthonormal} can be extended to CVCS after some modifications. The algorithms can efficiently compute the optimal (or numerically optimal) solution of BP, so to provide an effective means for empirically exploring the phase transition of CVCS. It is observed that not only a transition curve exists in the phase plane of CVCS, but also it is well above that of RVCS. Moreover, the universality of phase transitions across many different matrix ensembles also holds in CVCS.

The sparsity of a complex signal implies joint sparsity of its real and imaginary parts with the same support. This connects to the block-sparse CS (BSCS) problem which is to recover real valued signals with entries clustered into sparse blocks of equal size. The phase transition of the BSCS is analyzed in \cite{stojnic2009block} by Stojnic under the condition that the null space of the real valued random sensing matrix is distributed uniformly in the Grassmanian. It is shown in this letter that the phase transition of the CVCS coincides with that of the BSCS with the block size $2$. Analysis is further carried out to show that the phase transition of the CVCS is not a special case of that of the BSCS because the CVCS does not meet the aforementioned uniformly distributed null space condition.

Section \ref{section_ONE_L1} retrospects the main points in \cite{yang2010orthonormal} on ONE-L1 algorithms and extends the algorithms to the CVCS. Section \ref{section_pt_complex} applies the extended ONE-L1 algorithms to explore the phase transition of CVCS and its connection to the BSCS. Conclusion is drawn in Section \ref{section_conclusion}.

\section{ONE-L1 Algorithms for CVCS}\label{section_ONE_L1}

The ONE-L1 algorithms \cite{yang2010orthonormal} solve the BP in RVCS where $\m{x}$, $\m{A}$ and $\m{b}$ are all real valued. Assume that the sensing matrix $\m{A}$ is partially orthonormal, i.e., $\m{A}\m{A}'=\m{I}$ with $'$ denoting the transpose and $\m{I}$ being an identity matrix, and let $\Gamma({\m{d}})$ be an operator projecting the vector ${\m{d}}$ onto its first $n$ entries. The orthonormal expansion technique is to introduce an expanded orthonormal matrix $\Phi$ to reformulate the BP into
\[\po\quad\begin{array}{l}\min_{(\m{x},{\m{d}})}\onen{\m{x}}\\ \text{subject to} \ \Phi \m{x}={\m{d}}\text{ and }\Gamma({\m{d}})=\m{b},\end{array}\]
where the first $n$ rows of $\Phi$ compose $\m{A}$ and the rest orthonormal $N-n$ rows are arbitrary.

The augmented Lagrange multiplier (ALM) method is applied in \cite{yang2010orthonormal} to $\po$ and the obtained result leads to an exact ONE-L1 (eONE-L1) algorithm for iterative computation of the optimal solution. While eONE-L1 has an inner iteration loop embedded in the outer loop iteration, its relaxed version, rONE-L1, further speeds up the computation by simplifying the inner loop iteration into a single update. The rONE-L1 algorithm is numerically optimal in terms of the sparsity-undersampling tradeoff under reasonable parameter settings, i.e., its phase transition result matches that of BP. It has been shown that rONE-L1 is of iterative thresholding type and is very fast, with appropriate settings of the regulation variable, in comparison with other state-of-the-art algorithms.

An important operation in the ONE-L1 algorithms is the soft thresholding defined as
\[S_\epsilon (w)=\sgn\sbra{w}\cdot(\abs{w}-\epsilon)^+,\]
for $w\in\bR$, where $\epsilon\in\bR_+$, $(\cdot)^+=\max(\cdot,0)$ and
\[\sgn\sbra{w}=\left\{
\begin{array}{ll}w/\abs{w},&w\neq0;\\0,&w=0.
\end{array}\right.\]
In fact, $S_{\epsilon}\sbra{\m{w}}$ operates elementwisely on a real valued vector $\m{w}$. It has been well known that the soft thresholding solves the following $\ell_1$ regularized least squares problem
\equ{S_{\epsilon}\sbra{\m{w}}=\arg\min_{\m{v}} \lbra{\epsilon
\onen{\m{v}}+\frac{1}{2}\twon{\m{w}-\m{v}}^2}\label{formula_soft_threshold}}
for real valued vectors $\m{w},\m{v}$ of the same dimension.

To extend the ONE-L1 algorithms to CVCS, the complex sensing matrix $\m{A}\in\bC^{n\times N}$ is also assumed to satisfy $\m{A}\m{A}'=\m{I}$ with $'$ denoting the conjugate transpose. The soft thresholding operator $S_{\epsilon}\sbra{\m{w}}$, for a complex vector $\m{w}$, is defined as in the real case. It is also the optimal solution to (\ref{formula_soft_threshold}) for complex valued vectors $\m{w},\m{v}$. Let $\Re$ and $\Im$ be operators taking, respectively, the real and imaginary parts of a variable. The augmented Lagrangian function in \cite{yang2010orthonormal} is modified, for the CVCS, as
\equ{\cL(\m{x},\m{d},\m{y},\mu)=\onen{\m{x}}+\Re\inp{\m{d}-\Phi \m{x},\m{y}}+\frac{\mu}{2}\twon{\m{d}-\Phi \m{x}}^2,\notag}
where $\m{y}\in\bC^N$ is the Lagrange multiplier vector, $\mu\in\bR_+$ and $\inp{\m{u}_1,\m{u}_2}=\m{u}_1' \m{u}_2\in\bC$ is the inner product of $\m{u}_1,\textrm{ }\m{u}_2\in\bC^N$. The modified augmented Lagrangian function allows
a straightforward extension of the derivation and optimization steps in \cite{yang2010orthonormal} to the complex BP problem, yielding the optimal solution in the same form as that of the real valued BP. As a result, the ONE-L1 algorithms can be directly applicable to the CVCS problem. Readers are referred to \cite{yang2010orthonormal} for detailed algorithm steps and the optimality and convergence analysis.


\section{Phase Transition of CVCS}\label{section_pt_complex}

\subsection{ONE-L1 Estimation of Phase Transition of CVCS}
Section \ref{section_ONE_L1} has shown that the ONE-L1 algorithms can be extended and applicable to the CVCS problem. The eONE-L1 achieves the optimal solution of BP and rONE-L1 is numerically optimal and exponentially converges, which are applied in this subsection to empirically explore the sparsity-undersampling tradeoff of BP. The implementations of ONE-L1 algorithms follow the same procedure as their real versions in \cite{yang2010orthonormal}. We fix $r>1$ and let $\mu_{t+1}=r\cdot\mu_t$. The regulation parameter $r$ is set to $r=1+\delta$ in eONE-L1 and $r=\min\sbra{1+0.04\delta,1.02}$ is chosen in rONE-L1. The success of recovering the original signal is stated if the relative root mean squared error (RRMSE) $\twon{\hat{\m{x}}-\m{x}^o}/\twon{\m{x}^o}<10^{-4}$, where $\m{x}^o$ and $\hat{\m{x}}$ are the original and recovered signals, respectively.
Meanwhile, the failure in solving BP using ONE-L1 is stated if $\onen{\hat{\m{x}}}\geq\sbra{1+10^{-5}}\onen{\m{x}^o}$ and $\twon{\hat{\m{x}}-\m{x}^o}/\twon{\m{x}^o}\geq10^{-4}$.


Inspired by the estimation of the real phase transition in \cite{donoho2009observed,donoho2009message,yang2010orthonormal}, we first set a complex matrix ensemble, e.g., Gaussian, and dimension $N$. A grid of $\sbra{\delta,\rho}$ is generated in the plane $\mbra{0,1}\times\mbra{0,1}$ with equispaced $\delta\in\lbra{0.02,0.05,\cdots,0.98}$ and $\rho\in\lbra{\rho^R\sbra{\delta}+0.01(i-21):i=1,2,\cdots,41}$ with respect to $\delta$, where $\rho^R\sbra{\delta}$ denotes the theoretical real phase transition as shown in Fig. \ref{fig_pt_Fourier}. For each combination of $\sbra{\delta,\rho}$, $M=20$ random problem instances are generated and solved with $n=\lceil\delta N\rceil$ and $k=\lceil\rho n\rceil$. The number of success among $M$ instances is recorded. After data acquisition, a generalized linear modal (GLM) with a logistic link is used to estimate the phase transition.

We now explore the sparsity-undersampling tradeoff of BP with partial-Fourier sampling. Four values of signal length $N$ are considered, including 1024, 2048, 4096 and 8192. When $N=1024$, both eONE-L1 and rONE-L1 are used to estimate the phase transition of BP. The rONE-L1 algorithm is used for other $N$. Few failures in solving the BP occur when using rONE-L1. Fig. \ref{fig_pt_Fourier} presents the estimated phase transitions of partial-Fourier sampling. The five observed phase transitions of BP, estimated respectively by eONE-L1 with $N=1024$ and rONE-L1 with $N=1024$, 2048, 4096 and 8192, coincide with each other and are higher than the real phase transition of BP. Our earlier observation of the successful signal recovery rate reported in \cite{yang2011sparsity} also showed that a larger $N$ results in sharper phase transition. Hence, we can state:


\begin{figure}
  \centering
  \includegraphics[height=2.6in, width=3.5in]{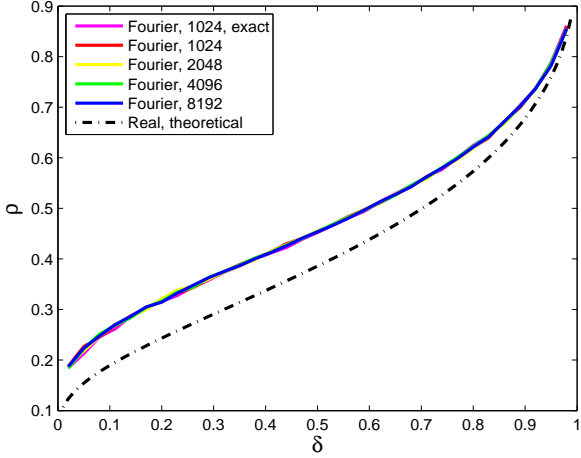}
  \caption{Observed phase transitions of partial-Fourier sampling. The upper five curves are observed phase transitions estimated by eONE-L1 with $N=1024$ and rONE-L1 with $N=1024$, $2048$, $4096$ and $8192$, respectively. The lower is the theoretical real phase transition of BP.}
  \label{fig_pt_Fourier}
\end{figure}

\begin{find}For complex signals and the partial-Fourier matrix ensemble with large dimension $N$, we observe that
\begin{itemize}
 \item[I.] BP exhibits phase transition in the plane of $\sbra{\delta,\rho}$, and a larger $N$ can result in a sharper phase transition.
 \item[II.] the complex phase transition of BP is higher than the real phase transition with a considerably enlarged success phase.
\end{itemize}\label{find_superiority}
\end{find}

\subsection{Universality of Phase Transitions of CVCS}\label{section_PT_universality}
The observed universality of phase transitions of BP in the real case has been discussed in \cite{donoho2009observed} and the same result is stated in \cite{donoho2009message,yang2010orthonormal}. In this subsection, we examine whether the same property holds in the complex case. Apart from the partial-Fourier matrix ensemble, three other complex matrix ensembles are considered with signal length $N=1000$, including Gaussian, Bernoulli and Ternary.
All random matrices have i.i.d. real and imaginary parts following the corresponding distributions. Bernoulli refers to equally likely being $0$ or $1$, and Ternary is equally likely to be $-1$, $0$ or $1$. It is noted that a matrix generated from these matrix ensembles may not be partially orthonormal as required by the ONE-L1 algorithms. This problem can always be resolved by left multiplying both sides of $\m{A}\m{x}=\m{b}$ with an invertible matrix, e.g. using QR decomposition, so the transformed equation meets the partially orthonormal condition and preserves the same solution space.

Few failures in solving BP occur in our experiment. Fig. \ref{fig_pt_universality} presents the observed phase transitions of BP with different matrix ensembles. Like the universality of phase transitions in the real case, we have the following finding.
\begin{find}
For complex signals and a number of complex matrix ensembles with large dimension $N$, BP exhibits the same phase transition.
\end{find}

\begin{figure}
  \centering
  \includegraphics[height=2.6in, width=3.5in]{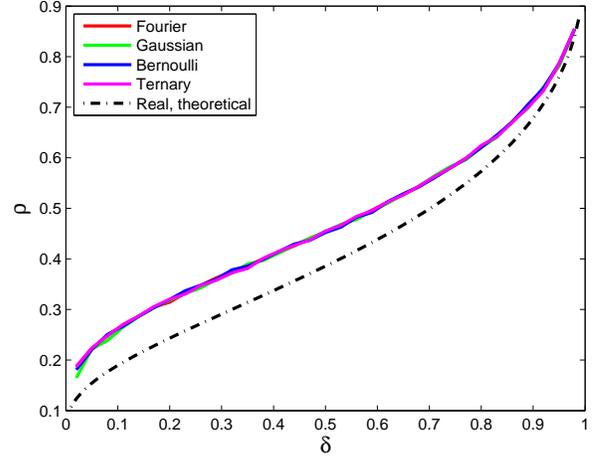}
  \caption{Observed universality of phase transitions of BP in the complex domain. The signal length $N=8192$ for partial-Fourier matrix ensemble and $N=1000$ for the other three matrix ensembles.}
  \label{fig_pt_universality}
\end{figure}

\subsection{Connection of CVCS and BSCS}
The complex system $\m{A}\m{x}=\m{b}$ can be rewritten into a real valued formulation as $\m{A}_r\m{x}_r=\m{b}_r$ with $\m{A}_r=\begin{bmatrix}\Re{\m{A}} & -\Im{\m{A}}\\\Im{\m{A}} & \Re{\m{A}}\end{bmatrix}$, $\m{x}_r=\begin{bmatrix}\Re{\m{x}}\\ \Im{\m{x}}\end{bmatrix}$ and $\m{b}_r=\begin{bmatrix}\Re{\m{b}}\\\Im{\m{b}}\end{bmatrix}$. By the $k$-sparsity of $\m{x}$, $\Re{\m{x}}$ and $\Im{\m{x}}$ are jointly $k$-sparse, in the sense that they are both $k$-sparse and share the same support. On the other hand, after proper permutations of entries of $\m{x}_r$ as well as the corresponding columns of $\m{A}_r$, $\m{A}_r\m{x}_r=\m{b}_r$ can be recast into a BSCS problem \cite{stojnic2009block} with block size $2$. The blocked signal of $\m{x}_r$ is $k$-block-sparse with at most $k$ nonzero block entries and its $\ell_{2,1}$-norm as defined in \cite{stojnic2009block} is equivalent to the $\ell_1$ norm of $\m{x}$ in the CVCS. Thus the CVCS problem is strongly connected to the BSCS problem via the real valued reformation $\m{A}_r\m{x}_r=\m{b}_r$. Their only difference is that $\m{A}_r$ is subject to a structured constraint whereas entries of the sensing matrix of the BSCS problem are independent in general.

A comparison of the phase transition of CVCS with theoretical phase transition of BSCS with block size $2$ in \cite{stojnic2009block} is presented in Fig. \ref{fig_pt_block2}. It is shown that the observed phase transition of CVCS coincides with that of BSCS.

In \cite{stojnic2009block}, the theoretical phase transition of BSCS is derived under the condition that the null space of the real valued random sensing matrix is distributed uniformly in the Grassmanian with respect to the Haar measure. It is known that the real Gaussian matrix ensemble satisfies such a condition \cite{stojnic2009reconstruction}. We now provide an analysis in the following proposition to show that such a condition is not satisfied in the CVCS problem. It therefore clarifies that the phase transition of the CVCS studied in this letter is not a special case of the phase transition result of BSCS in \cite{stojnic2009block}.

\begin{figure}
  \centering
  \includegraphics[height=2.6in, width=3.5in]{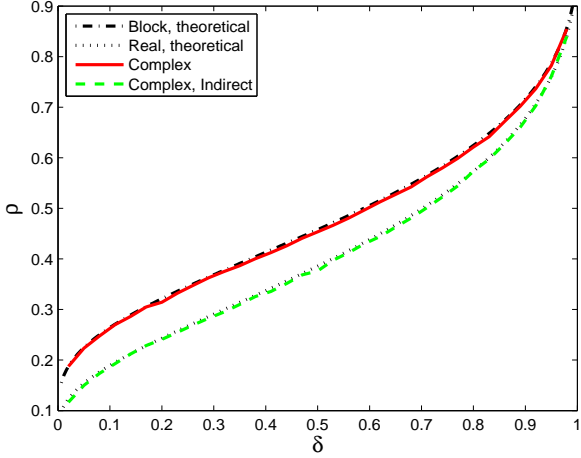}
  \caption{Coincidence between our observed complex phase transition and the block-sparse phase transition with block size of $2$. The indirect phase transition curve refers to minimizing $\onen{\m{x}_r}$ subject to $\m{A}\m{x}=\m{b}$.}
  \label{fig_pt_block2}
\end{figure}

\begin{prop} For any complex random matrix $\m{A}$ associated with the CVCS problem, the null space of $\m{A}_r$ is not distributed uniformly in the Grassmanian with respect to the Haar measure. \label{cor:nullspace}
\end{prop}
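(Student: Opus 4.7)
The plan is to exploit the intrinsic ``multiplication-by-$i$'' symmetry that the real reformulation $\m{A}_r$ inherits from its underlying complex matrix $\m{A}$, and then show that this symmetry forces the null space of $\m{A}_r$ into a measure-zero submanifold of the Grassmannian. Concretely, I would introduce the fixed real orthogonal matrices
\[\m{J}_N=\begin{bmatrix}\m{0} & -\m{I}_N\\ \m{I}_N & \m{0}\end{bmatrix}\in\bR^{2N\times 2N},\qquad \m{J}_n=\begin{bmatrix}\m{0} & -\m{I}_n\\ \m{I}_n & \m{0}\end{bmatrix}\in\bR^{2n\times 2n},\]
which are the real matrix representations of multiplication by $i$ on $\bC^N$ and $\bC^n$, respectively. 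A direct block computation using the definition of $\m{A}_r$ immediately yields the commutation relation $\m{A}_r\m{J}_N=\m{J}_n\m{A}_r$, valid for every realization of $\m{A}$; this is essentially the statement that $\m{A}$ is complex-linear.

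Next I would use this identity to show that the null space $\cN\sbra{\m{A}_r}$ is always $\m{J}_N$-invariant: if $\m{A}_r\m{x}_r=\m{0}$, then $\m{A}_r\sbra{\m{J}_N\m{x}_r}=\m{J}_n\m{A}_r\m{x}_r=\m{0}$, hence $\m{J}_N\m{x}_r\in\cN\sbra{\m{A}_r}$. Consequently, no matter how $\m{A}$ is drawn, $\cN\sbra{\m{A}_r}$ lies inside the set $\cG_{\text{inv}}$ of $2(N-n)$-dimensional real subspaces of $\bR^{2N}$ that are invariant under $\m{J}_N$.

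The concluding step is a dimension count. Because $\m{J}_N^2=-\m{I}_{2N}$, the operator $\m{J}_N$ endows $\bR^{2N}$ with a complex structure under which it is naturally identified with $\bC^N$, and $\m{J}_N$-invariant real subspaces of real dimension $2(N-n)$ are in bijection with complex subspaces of $\bC^N$ of complex dimension $N-n$. Thus $\cG_{\text{inv}}$ can be identified with the complex Grassmannian of $(N-n)$-planes in $\bC^N$, a smooth manifold of real dimension $2n(N-n)$. The ambient real Grassmannian of $2(N-n)$-planes in $\bR^{2N}$ has real dimension $4n(N-n)$, strictly larger than $2n(N-n)$ whenever $0<n<N$. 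Hence $\cG_{\text{inv}}$ is a proper smooth submanifold and has Haar measure zero. Since the distribution of $\cN\sbra{\m{A}_r}$ is concentrated on this Haar-null set while the Haar measure assigns full measure to the entire Grassmannian, the two distributions cannot coincide, which proves the proposition.

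I do not expect any substantive obstacle: the only point requiring a little care is the identification of $\m{J}_N$-invariant real subspaces with complex subspaces of $\bC^N$ together with the ensuing dimension count; everything else, including the commutation identity and the invariance of $\cN\sbra{\m{A}_r}$, is an elementary consequence of the block structure of $\m{A}_r$.
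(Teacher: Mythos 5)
Your proposal is correct and rests on exactly the same key observation as the paper's proof: the null space of $\m{A}_r$ is invariant under the real representation of multiplication by $i$ (the paper phrases this as $\begin{bmatrix}\m{w}\\\m{v}\end{bmatrix}\in\cN\sbra{\m{A}_r}$ implying $\begin{bmatrix}-\m{v}\\\m{w}\end{bmatrix}\in\cN\sbra{\m{A}_r}$, so that a basis of the form $\begin{bmatrix}\m{W} & -\m{V}\\\m{V} & \m{W}\end{bmatrix}$ always exists). The only difference is that you go further and rigorously justify the final step the paper leaves implicit, via the dimension count showing that the $\m{J}_N$-invariant $2(N-n)$-planes form a complex Grassmannian of real dimension $2n(N-n)$ inside the real Grassmannian of dimension $4n(N-n)$, hence a Haar-null set.
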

\begin{proof} Let $\cN\sbra{\m{A}_r}$ denote the null space of $\m{A}_r$. Suppose $\begin{bmatrix}\m{w}\\\m{v}\end{bmatrix}\in\cN\sbra{\m{A}_r}$ with $\m{w},\m{v}\in\bR^{N}$.
It is obvious that $\begin{bmatrix}-\m{v}\\\m{w}\end{bmatrix}\in\cN\sbra{\m{A}_r}$ and thus it is always possible to choose a basis for $\cN\sbra{\m{A}_r}$ in the form $\begin{bmatrix}\m{W} & -\m{V}\\\m{V} & \m{W}\end{bmatrix}$ with $\m{W},\m{V}\in\bR^{N\times\sbra{N-n}}$ (columns of $\m{W}+j\m{V}\in\bC^{N\times\sbra{N-n}}$ in fact compose a basis for the null space of $\m{A}$). As a result, $\cN\sbra{\m{A}_r}$ is not distributed uniformly in the Grassmanian.
\end{proof}

Proposition \ref{cor:nullspace} shows that the existing phase transition theory of BSCS cannot explain our obtained empirical phase transition of the CVCS. Intuitively, the coincidence of the phase transition of CVCS with that of BSCS may be interpreted as the universality of phase transitions of BSCS across different matrix ensembles. Fig. \ref{fig_pt_block2} also shows that the observed phase transition by minimizing $\onen{\m{x}_r}$, subject to $\m{A}\m{x}=\m{b}$ and without taking into account the block-sparsity, matches the real phase transition of BP. It further explains that the higher successful rate of CVCS is obtained by incorporating the block-sparsity factor into its problem solving.

In the above, it has been shown that the CVCS problem is connected to BSCS with a special structure of the sensing matrix $\m{A}_r$. It is interesting to consider that the sensing matrix in CVCS is further specialized into a block-diagonal matrix, i.e., $\Im\m{A}=\m{0}$ in $\m{A}_r$ or, equivalently, $\m{A}\in \bR^{n\times N}$. Such a case refers to the jointly sparse CS or the case of multiple measurement vectors (MMV). In comparison with the standard real CS or the case of single measurement vector (SMV), the recovery performance can be improved in the MMV case under some assumptions on the distribution of entries of $\m{x}$\cite{eldar2010average}. On the other hand, the analysis of the null space property in \cite{foucart2010real} shows that there is
little performance improvement in the worst case of signal recovery, such as an $\m{x}$ with identical real and imaginary parts. So, in this special case of jointly sparse CS, the observed universality of phase transitions of BSCS does not hold.

\section{Conclusion}\label{section_conclusion}
In this letter, the sparsity-undersampling tradeoff of BP for the CVCS problem is empirically explored in terms of the phase transition. It is found that the same properties exist as those in the RVCS case, such as the existence of the phase transition and observed universality across different matrix ensembles. The empirical phase transition of CVCS presents larger successful phase than that of RVCS, indicating that more successful recoveries of complex signals can be achieved by incorporating the signal structure into the problem solving process. It is shown that the empirical phase transition of CVCS is connected to and coincides with that of BSCS with block size $2$. It is however not a special case of the existing phase transition theory of BSCS and can be intuitively interpreted as an extension of current results based on the universality argument. So far, the rigorous relationship between the phase transitions of CVCS and BSCS problems is still an open problem. It is finally noted that an analysis, based on a different algorithm, of the existence of the phase transition of CVCS and its expression that agrees with the results in this letter has been reported in \cite{maleki2011asymptotic} during the preparation of this letter.

\bibliographystyle{IEEEtran}


\end{document}